\newcommand{\keywords}[1]{\par\addvspace\baselineskip
\noindent\keywordname\enspace\ignorespaces#1}
\newcommand{\tabitem}{~~\llap{\textbullet}~~}
\begin{document}

\title{Exploiting Large-scale Teacher-Student Training for On-device Acoustic Models}

\titlerunning{Exploiting Large-scale Teacher-Student Training for On-device Acoustic Models}


\author{Jing Liu \inst{1} \and Rupak Vignesh Swaminathan \inst{1}  \and Sree Hari Krishnan Parthasarathi \inst{1} \and \\ 
Chunchuan Lyu \inst{2} \and Athanasios Mouchtaris \inst{1} \and Siegfried Kunzmann \inst{1}}

\authorrunning{Jing Liu et al.}

\institute{Alexa Machine Learning, Amazon, USA \\
 \and
 School of Informatics, University of Edinburgh, Scotland, UK \\
 \mailsb\\
}


\index{Liu, Jing}
\index{Swaminathan, Rupak}
\index{Parthasarathi, Hari}
\index{Lyu, Chunchuan}
\index{Mouchtaris, Athanasios}
\index{Kunzmann, Siegfried}
\toctitle{} \tocauthor{}

\maketitle

%
%
%
%
\begin{abstract}
We present results from Alexa speech teams on semi-supervised learning (SSL) of acoustic models (AM) with experiments spanning over 3000 hours of GPU time, making our study one of the largest of its kind. We discuss SSL for AMs in a small footprint setting, showing that a smaller capacity model trained with 1 million hours of unsupervised data can outperform a baseline supervised system by 14.3\% word error rate reduction (WERR). When increasing the supervised data to seven-fold, our gains diminish to 7.1\% WERR; to improve SSL efficiency at larger supervised data regimes, we employ a step-wise distillation into a smaller model, obtaining a WERR of 14.4\%. We then switch to SSL using larger student models in low data regimes; while learning efficiency with unsupervised data is higher, student models may outperform teacher models in such a setting. We develop a theoretical sketch to explain this behavior.

\keywords{speech recognition, acoustic models, edge computing, student-teacher learning, semi-supervised learning}
\end{abstract}

\section{Introduction}
\label{sec:intro}

Semi-supervised learning (SSL) has a rich history in automatic speech recognition (ASR)~\cite{deepspeech2:2016,gong-semi:2016,kemp1999unsupervised,lamel2002lightly,ma2006unsupervised}. Self-training is a commonly used technique employing confidence measures~\cite{siu1997improved,huang2013semi}. Student-teacher distillation techniques~\cite{ba2014deep,hinton2015distilling}, foregoing full decoders and confidence models, have been shown to be effective for SSL~\cite{li2014learning}. SSL methods for end-to-end ASR have been studied in~\cite{ibm1,Kahn_2020,weninger2020semisupervised,chen2020semisupervised,mun2019sequence}. Furthermore, investigations with SSL in combination with data augmentation, pretraining and iterative self-training are done in~\cite{zhang2020pushing} and~\cite{xie2020unsupervised}. 

Recently, student-teacher distillation techniques for hybrid HMM-LSTM models have been shown to scale to very large data sets (1 million hours) for models with high capacity~\cite{ssl_1mhr_paper,parthasarathi2019realizing}. The efficacy of model compression using student-teacher distillation is well established~\cite{mirzadeh2019improved,KD_Waters_2016,Watanabe_TS_2017}. In this context we study learning curves for AM for two tasks: (a) smaller footprint modeling, and (b) low training-data regimes. 

Our motivation for low footprint AM comes from edge computing, where models are capacity restricted in terms of compute and memory \cite{shi2016edge,yu2017survey,he2018streaming}. We are interested in understanding if SSL, at very large data regimes for small models, can still yield gains in accuracy. In~\cite{Moriya2018}, mean squared errors between the teacher and the student hidden representations are explored as a regularization term in knowledge distillation. We demonstrate that a step-wise distillation approach, introduced in~\cite{mirzadeh2019improved} can be effective, although this comes at the cost of more computation at training time. In low data regimes, SSL is an effective technique to reduce annotation costs \cite{kemp1999unsupervised,lamel2002lightly,ma2006unsupervised,Chen2019}. For our second task, using knowledge distillation for SSL, we find that to achieve a performance comparable to that of a fully supervised system, the proportion of required supervised data decreases as the amount of total data increases. However, we find that in low data regimes, students can be better than teachers. 

\begin{figure}[tp]
  \centering
  \centerline{\includegraphics[width=0.8\linewidth]{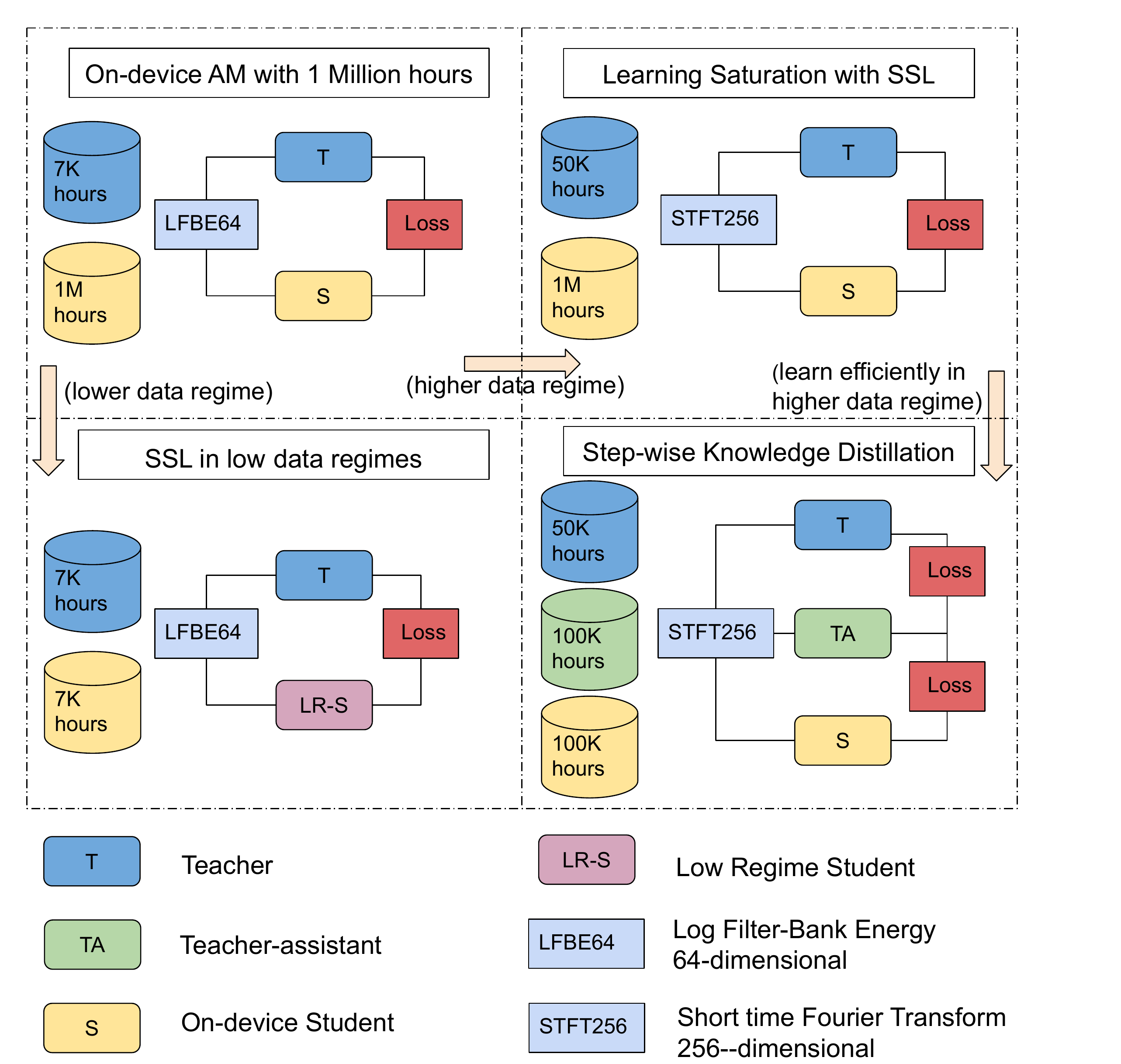}}
  \caption{\textit{Overview of our SSL approach progressing through 4 experiments. a) Top-left: Analysis of models using large unsupervised data. b) Top-right: Analysis of models in large supervised and unsupervised data regimes. c) Bottom-right: Improving learning efficiency in higher data regimes. d) Bottom-left: SSL studies in low data regimes. }}\medskip
\label{fig:overview}
\end{figure}

Our contributions in this work are as follows: (1) we establish the robustness of small capacity semi-supervised models trained on 1 million hours of data; (2) we show an effective way to mitigate the learning saturation problem at higher data regimes for an on-device acoustic model; (3) we report results of a model distilled from a teacher trained on transcribed low-resource data, and present an empirical risk analysis. Figure~\ref{fig:overview} describes the approach we follow in this paper. We begin with a discussion on model configurations in Section~\ref{sec:modeling}. In Section~\ref{sec:small_footprint_exp}, we present SSL for small footprint AMs. We show that such models can exploit a large amount of unsupervised data. However, when the amount of supervised data is increased from 7,000 hours to 52,000 hours the gains decrease. To mitigate this, we discuss a step-wise distillation into a smaller model. We then transition into experiments conducted in low resource settings in Section~\ref{sec:low_resource_exp}.

\section{System Description}
\label{sec:modeling}
We now describe acoustic model configurations used in this work. It is a hybrid system, with an LSTM~\cite{lstm_1997} estimating the senone posterior probabilities corresponding to clustered triphone HMM states. The HMMs are single state models using low-frame rate features~\cite{pundak2016lower}, which are computed on speech signals every 10 ms, with a 25 ms analysis window. A running causal mean estimate is computed and subtracted from the features, and the resulting features are normalized by applying a global mean and variance normalization. The models are trained with the cross-entropy (CE) criterion, followed by sequence discriminative training using state-level minimum Bayes risk (sMBR) loss~\cite{sMBR_2009}. We follow an exponential learning rate decay for twelve epochs. More details on model configurations can be found in Table~\ref{ssl_model_configs}.

\begin{table}[h!]
\caption{\textit{Model Configurations}}
\label{ssl_model_configs}
\vspace{1mm}
\begin{tabular}{l}
\hline
\textbf{Common configurations for all experiments}  \\ \hline
   \tabitem Features are stacked and subsampled to 33 Hz \\
   \tabitem 5 layers and 768 neurons/layer BLSTM teachers with 78 million parameters \\     
    \tabitem Teacher model is cross-entropy and sequence trained \\    
    \tabitem Teacher model trained with distributed trainer from~\cite{strom2015scalable}\\
    \tabitem Output posterior distribution over 3183 senones \\ 
    \tabitem Data selection for unsupervised data as in~\cite{ssl_1mhr_paper}  \\ 
    \tabitem Training/test data are sampled from de-identified speech data \\ 
    \tabitem Student training strategy with compressed posteriors as in~\cite{ssl_1mhr_paper} \\ 
    \tabitem TST1:  Test data consists of 100 hours of speech data \\ 
    \tabitem TST2:  Test data consists of 30 hours of speech data \\ \hline
\textbf{Section~\ref{subsec:On_device_AM_1M_speech} Configurations for on-device AM}  \\ \hline
   \tabitem 64-dimensional log filter-bank energies (LFBE) \\
   \tabitem 5 layers and 428 neurons/layer Uni-LSTM on-device student AMs with 8 million parameters  \\
    \tabitem Cross entropy training of student models with distributed trainer from~\cite{bmuf-2016} \\
    \tabitem sMBR training of student model with distributed trainer from~\cite{strom2015scalable} \\
   \tabitem 7,000 hours  of supervised training data  with TST1 test data\\ 
   \tabitem 1 million hours of unsupervised speech data       \\ \hline
\textbf{Section~\ref{subsec:On_device_SSL_saturation} Configurations for studying learning saturation} \\ \hline
   \tabitem 256-dimensional STFT features~\cite{Sak2015FastAA,flstm_2015} \\
     \tabitem Teacher and student model architectures: same as above \\ 
     \tabitem Distributed training strategy: same as above \\ 
   \tabitem 52 K hours supervised training data with TST1 test data\\
   \tabitem 1 million hours of unsupervised speech data                                  \\ \hline
\textbf{Section~\ref{sec:takd_exp} Configurations for step-wise distillation}    \\ \hline
   \tabitem 256-dimensional STFT features \\
     \tabitem Teacher and student model architectures: same as above \\ 
     \tabitem Distributed training strategy: same as above \\ 
   \tabitem 50 K hours supervised training data with TST1 test data\\
   \tabitem 100 K hours unsupervised data                             \\ \hline
\textbf{Section~\ref{sec:low_resource_exp} Configurations for low resource experiments}                          \\ \hline
   \tabitem 64-dimensional LFBE features \\
     \tabitem 5 layers and 768 neurons/layer Uni-LSTM students with 24 million parameters \\   
    \tabitem Teacher and student models trained with distributed trainer from~\cite{strom2015scalable}\\
   \tabitem Total training data is 7000 hours with TST2 test data\\
    \tabitem Supervised data ranges from 100 hours up to 7000 hours \\
    \tabitem The rest of the data is treated as unsupervised data\\
                                                              
\end{tabular}
\end{table}

\section{SSL for small footprint AM}
\label{sec:small_footprint_exp}
In this section, we begin with an analysis of acoustic model complexity with supervised and unsupervised data for smaller and larger model footprints. We then investigate a method for efficient model distillation in larger supervised data regimes.

\subsection{Learning Curves on Large Unsupervised Data}
\label{subsec:On_device_AM_1M_speech}
In Figure \ref{fig:wer_lfbe_edge_ssl_vs_cloud}, on TST1 test data, we analyze the learning curves for on-device and cloud student models, taught by the same teacher network. Accuracy is reported as relative word error rate reduction (WERR)~\cite{parthasarathi2015fmllr,garimella2015robust}. Given model A's WER ($\text{WER}_A$) and a baseline B's WER ($\text{WER}_B$), the WERR of A over B is computed as
$
\text{WERR} = (\text{WER}_B - \text{WER}_A)/\text{WER}_B.
$
\begin{figure}[th]
  \centering
  \centerline{\includegraphics[width=0.8\linewidth]{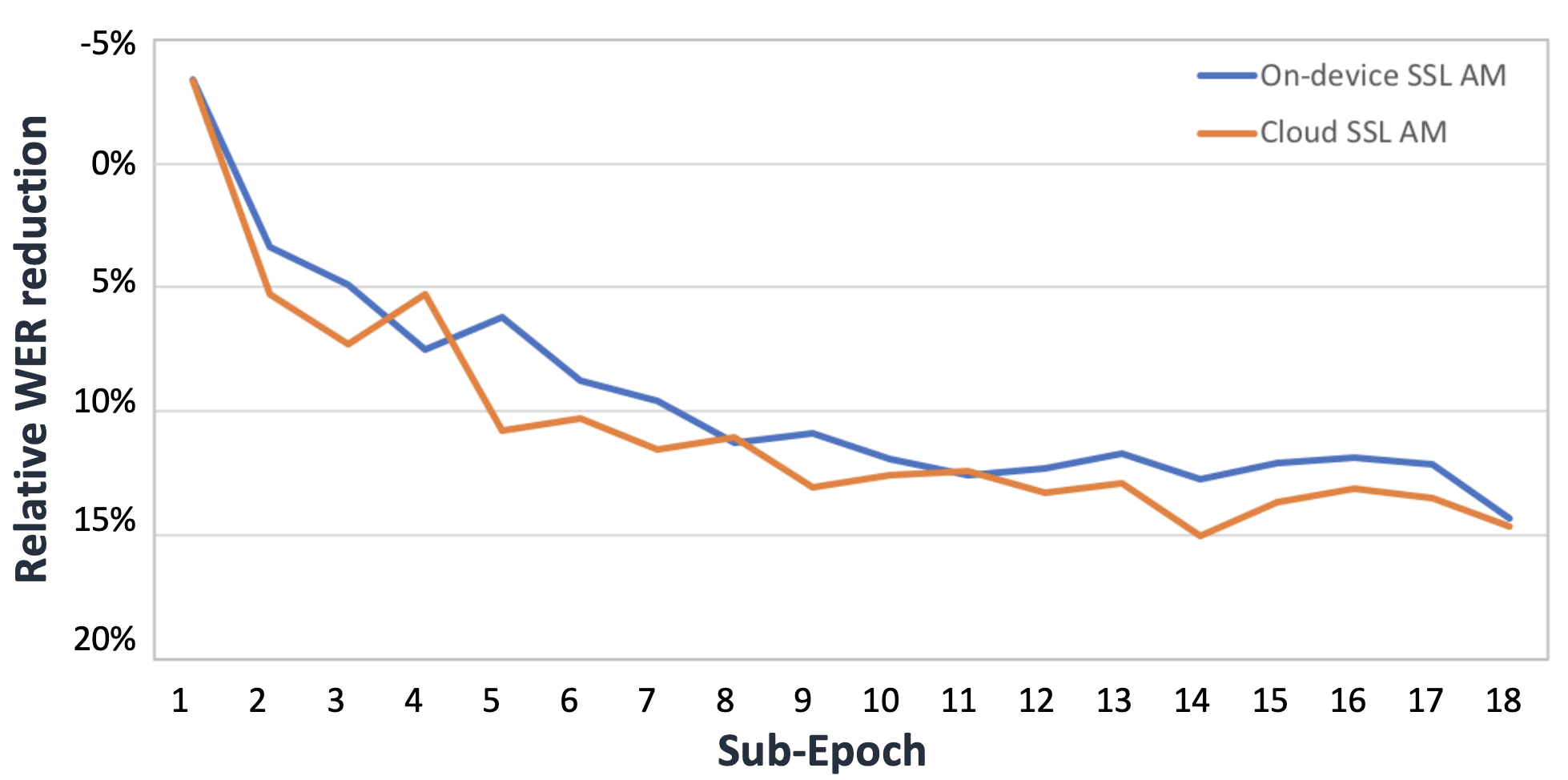}}
  \vspace{0.2cm}
  \caption{\textit{On TST1: relative WERR (\%) per sub-epoch of the 1 million hour SSL models against baseline LSTM AMs that are trained with CE criterion on the fully supervised 7,000 hour training data.}}\medskip
\label{fig:wer_lfbe_edge_ssl_vs_cloud}
\end{figure}

The vertical axis is the relative WERR (\%) against baseline LSTM AMs which are trained with CE criterion on the fully supervised 7,000 hour training data. The horizontal axis corresponds to the amount of unsupervised data. Each sub-epoch in the axis corresponds to about 60,000 hours of unsupervised data, totaling to 1 million hours. From Figure \ref{fig:wer_lfbe_edge_ssl_vs_cloud}, we see that the relative WERR improves steadily for both cloud and on-device student AMs as we use increasing amount of unsupervised data. When using 1 million hours unsupervised data, the relative WERR for on-device and cloud student AMs are 14.3\% and 14.6\% respectively in CE comparing with its fully supervised models.

We perform sequence level discriminative training only on the 7,000 hour supervised dataset, demonstrating that the gains at the CE stage also carry over to the sMBR stage. Table~\ref{tab:lfbe_data_wer} shows WERR for on-device and cloud student AMs using the same 4-gram LM. The WERRs for SSL CE + sMBR and supervised CE + sMBR are 25.7\% and 18.2\% respectively for on-device AMs.

\begin{table}[th]
\caption{\textit{On TST1: relative WERR (\%) for sequence training of SSL students. The baseline LSTM AMs are trained with CE criterion in a fully supervised setting with 7,000 hours of data}}
\label{tab:lfbe_data_wer}
\vspace{-1mm}
\begin{center}
\begin{tabular}{|l|l|l|}
\hline
\multicolumn{3}{|c|}{\textbf{WERR (\%)}}                          \\ \hline
\textbf{System}            & \textbf{On-device} & \textbf{Cloud} \\ \hline
Baseline supervised CE        & 0                  & 0              \\ \hline
Baseline supervised CE + sMBR & 18.2               & 9.7            \\ \hline
SSL CE                     & 14.3               & 14.6           \\ \hline
SSL CE + sMBR              & 25.7               & 24.0           \\ \hline
\end{tabular}
\end{center}
\end{table}

\subsection{Learning Saturation for On-device SSL System}
\label{subsec:On_device_SSL_saturation}
In the following experiment, we increase the supervised data by about 7 times to 52,000 hours. We then gradually increase the amount of unsupervised data set to 1 million hours, computing WERR against a supervised model trained with 52,000 hours of supervised data. From Table \ref{tab:stft_data_wer}, we get a relative WERR of 6.06\% by adding up to 540 K hours of unsupervised data. If we further increase unsupervised data from 540 K hours to 1 million hours, the additional relative WERR is only 1\%. We conclude that learning starts to saturate at 540 K hours. Adding additional unsupervised data helps little with accuracy improvement.

 \vspace{-0.4cm}
\begin{table}[t]
\caption{\textit{On TST1: relative WERR (\%) for SSL student AM with 8M parameters. Both the baseline and SSL AM are trained with 52k hours of fully supervised data.}}
\label{tab:stft_data_wer}
 \vspace{-0.1cm}
\begin{center}
\begin{tabular}{|l|l|l|}
\hline
   \textbf{WERR(\%)}&  \textbf{Unsupervised (hours)} \\ \hline
  6.05\%      & 540 K\\ \hline
  6.62\%  &720 K \\ \hline
 6.86\%  &  900 K     \\ \hline
  7.05\%  &  1 M   \\ \hline
\end{tabular}
\end{center}
\end{table}

\subsection{Improving the Efficiency of Knowledge Distillation}
\label{sec:takd_exp}
In this section, we study a distillation method to improve the learning efficiency at higher data regimes. Specifically, we reduce the gap between the teacher and student models through an intermediate teacher assistant~\cite{mirzadeh2019improved}.

The supervised part of the training data consists of about 50 K hours of supervised US English speech, and the unsupervised part consists of about 100 K hours of unsupervised data. The unsupervised training data is used for training the teacher assistant. We perform step-wise teacher assisted knowledge distillation. The first-step knowledge distillation happens through a 78-million-parameter bidirectional LSTM teacher model and a 28-million-parameter teacher assistant model. The second-step knowledge distillation occurs between the 28-million-parameter teacher assistant model and an 8-million-parameter student network using only the soft-targets from the teacher assistant. 

Table \ref{tab:KD} shows relative WERR against a baseline cross-entropy student model (row 4) trained with 50,000 hours of supervised data. Rows 1, 2 and 3 show WERR of the teacher, supervised cloud model, and SSL-trained teacher assistant models evaluated against baseline. Sequence training with sMBR (in row 4) shows 17.7\% relative WERR in a fully supervised system. The distillation results are presented in rows 5 and 6. The improvement over direct Knowledge Distillation (KD) is 9.2\% relative. For Teacher Assisted Knowledge Distillation (TAKD), there is a substantial improvement of 14.4\% compared to baseline supervised system, thus showing that step-wise distillation with a teacher assistant indeed helps with the efficiency for smaller models in large data regimes. 

\begin{table}[tp]
\caption{\textit{On TST1: relative WERR (\%) for teacher assisted knowledge distillation method. The baseline model is the cross-entropy trained model on supervised data (row 4). WERR values are computed against baseline model's WER.}}
\label{tab:KD}
\vspace{2mm}
\begin{center}
\begin{tabular}{|c|c|c|c|c|}
\hline
\multicolumn{1}{|l|}{\textbf{Index}} & \multicolumn{1}{c|}{\textbf{AM}} & \multicolumn{1}{l|}{\textbf{Params (M)}} & \multicolumn{1}{l|}{\textbf{Stage}} & \multicolumn{1}{l|}{\textbf{WERR (\%)}} \\ \hline
1                                    & Teacher                          & 78                                       & sMBR                                & 33.0                                    \\ \hline
2                                    & Cloud                            & 28                                       & sMBR                                & 21.67                                   \\ \hline
3                                    & TA                               & 28                                       & sMBR                                & 25.97                                   \\ \hline
\multirow{2}{*}{4}                   & \multirow{2}{*}{On-device}       & \multirow{2}{*}{8}                       & CE                                  & 0                                       \\ \cline{4-5} 
                                     &                                  &                                          & sMBR                                & 17.7                                    \\ \hline
\multirow{2}{*}{5}                   & \multirow{2}{*}{KD}              & \multirow{2}{*}{8}                       & CE                                  & 9.2                                     \\ \cline{4-5} 
                                     &                                  &                                          & sMBR                                & 15.7                                    \\ \hline
\multirow{2}{*}{6}                   & \multirow{2}{*}{TAKD}            & \multirow{2}{*}{8}                       & CE                                  & 14.4                                    \\ \cline{4-5} 
                                     &                                  &                                          & sMBR                                & 19.0                                    \\ \hline
\end{tabular}
\end{center}
\end{table}

\section{SSL in Low Resource Settings}
\label{sec:low_resource_exp}
In this section, we present our results on low resource settings with SSL; specifically, now the student model is larger, but the overall amount of data is restricted to 7,000 hours.

\subsection{Accuracy Gains With Unsupervised Data}
\label{sec:exp:gains_unsupervised}
Figure~\ref{fig:unsupervised} shows learning curves for student models at fixed amount of supervised data. The actual amounts of supervised data are: 100, 250, 500, 1000, 3500 and 7000 hours. Triangular markers on each curve correspond to student models trained with increasing amounts of unsupervised data by fixing the amount of supervised data. 

Relative WERR is computed against a baseline model trained only on the supervised data corresponding to that curve of the same color. The circular solid dots in this plot show the relative WERR for the teacher models. These dots line up vertically against the triangles corresponding to the baselines models.

For each curve, the slope of the curve becomes less steep as the amount of unsupervised data increases, meaning diminishing returns for additional data. Note that the student models can outperform the corresponding teacher model, once the student model observes unsupervised data.

\begin{figure}[ht]
  \centering
  \centerline{\includegraphics[width=0.60\linewidth, angle =270]{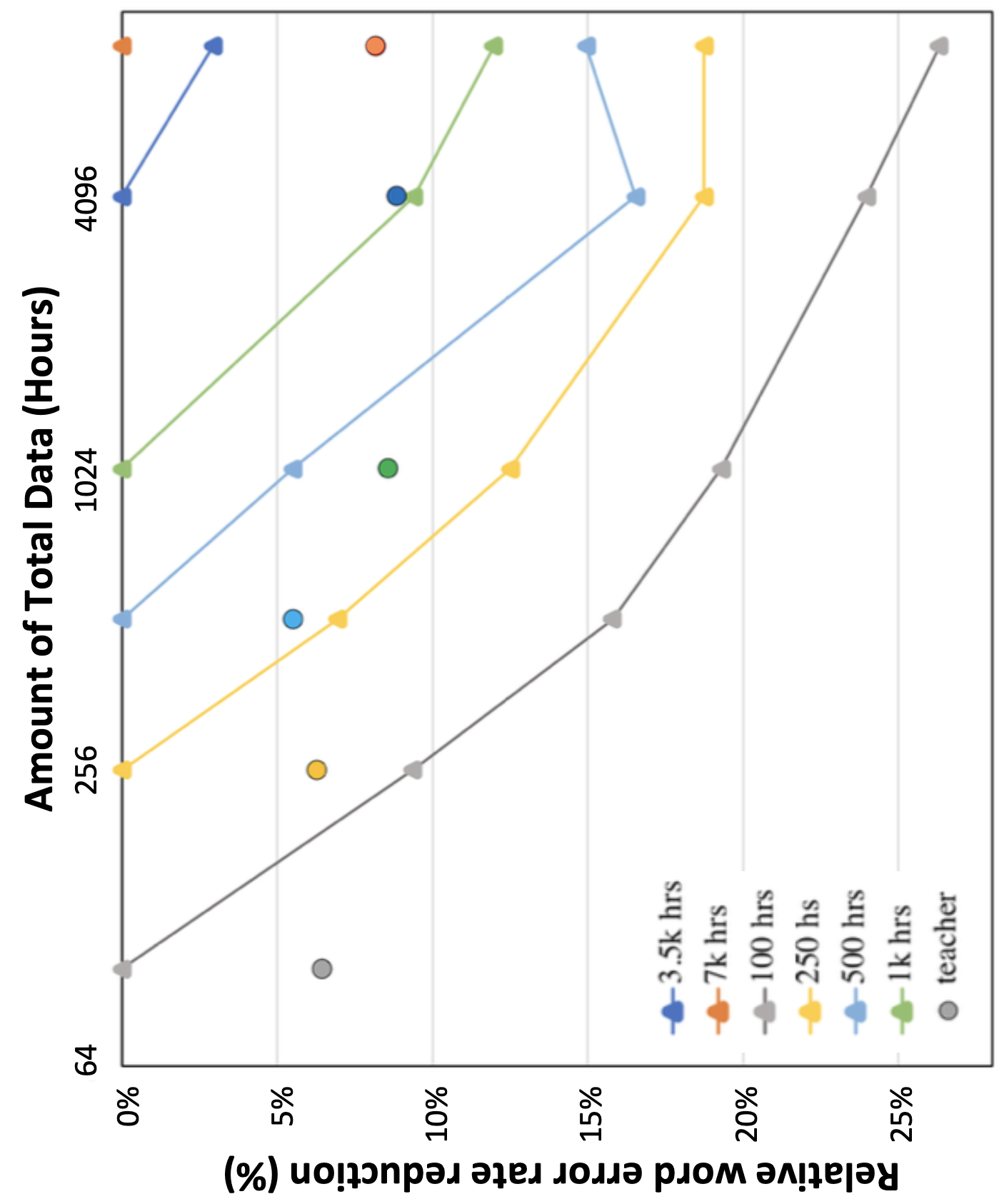}}
  \caption{\textit{On TST2: relative WERR (\%) for different amounts of unsupervised data: each curve corresponds to a fixed amount of supervised data. Markers on each curve correspond to student models trained with increasing amounts of unsupervised data. Relative WERR is computed against the model trained only on the supervised data corresponding to that curve. The circular dots correspond to the BLSTM teacher models for the curves of the same color.}}\medskip
\label{fig:unsupervised}
\end{figure}

\subsection{An Analysis of Empirical Risk with Student-Teacher Learning}
\label{sec:theory}
To simplify analysis, we restrict our setting to binary classification, and adopt notations from \cite{begin2014pac}. Let $Z = \{(x_i,y_i)\}_{i=1,\dots,N}$ be a supervised set drawn from a distribution $D$ on $\mathcal{X}\times \{-1,1\}$, where $\mathcal{X}=\{x_i\}_{i=1..N}$. Let $\mathbbm{1}[A]: A \subset \mathcal{X} \to \{0, 1\}$ be an indicator function defined as   $
    \mathbbm{1}[A](x)=\left\{
                \begin{array}{ll} 1 \quad \text{ if } x \in A \\
                  0 \quad \text{ if } x \notin A
                \end{array}
              \right.
  $
. Teacher and student network outputs are denoted $h^t(x)$ and $h^s(x)$ respectively, and are drawn from $\{-1,1\}$. The empirical teacher-student training risk is
$$R^{h^t}_{Z}(h^s):=\frac{1}{N} \sum_{i=1,\dots,N} \mathbbm{1} [h^t(x_i)\neq h^s(x_i)],$$
and the actual empirical risk of the student is
$$R_Z(h^s):=R^{y}_{Z}(h^s)=\frac{1}{N} \sum_{i=1,\dots,N} \mathbbm{1} [y_i\neq h^s(x_i)].$$

\begin{lemma}[Decomposition]
Given $h^t,h^s,y\in \{-1,1\}$, we have
\begin{equation*}
R_{Z}(h^s)=R_{Z}(h^t) \big(1-R_{acc(h^t)}^{h^t}(h^s) -R_{err(h^t)}^{h^t}(h^s)\big) + R_{acc(h^t)}^{h^t}(h^s)
\end{equation*}
where $acc(h^t) := \{x,y  \in Z| h^t(x)= y\}$ and $err(h^t) := \{x,y  \in Z| h^t(x)\neq y\} $.
\end{lemma}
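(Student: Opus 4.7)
The plan is to partition the sample $Z$ into the two disjoint pieces the lemma already names, $acc(h^t)$ where the teacher is correct and $err(h^t)$ where the teacher errs, and then rewrite the indicator $\mathbbm{1}[y_i \neq h^s(x_i)]$ on each piece using the teacher's label instead of the true label. This is the only place binary classification is used in an essential way: on $err(h^t)$, the relation $h^t(x)\neq y$ combined with $y\in\{-1,1\}$ forces $y=-h^t(x)$, so student correctness and teacher-student disagreement swap roles.

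Concretely, I would first write
\[
R_Z(h^s)=\frac{1}{N}\Bigl[\sum_{(x,y)\in acc(h^t)} \mathbbm{1}[y\neq h^s(x)] + \sum_{(x,y)\in err(h^t)} \mathbbm{1}[y\neq h^s(x)]\Bigr].
\]
On $acc(h^t)$, substitute $y=h^t(x)$ to turn $\mathbbm{1}[y\neq h^s(x)]$ into $\mathbbm{1}[h^t(x)\neq h^s(x)]$. On $err(h^t)$, substitute $y=-h^t(x)$ to turn $\mathbbm{1}[y\neq h^s(x)]$ into $1-\mathbbm{1}[h^t(x)\neq h^s(x)]$. Multiplying and dividing by $|acc(h^t)|$ and $|err(h^t)|$ then expresses the two inner sums as $|acc(h^t)|\,R^{h^t}_{acc(h^t)}(h^s)$ and $|err(h^t)|-|err(h^t)|\,R^{h^t}_{err(h^t)}(h^s)$.

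Finally, I would use the observation that $|acc(h^t)|/N=1-R_Z(h^t)$ and $|err(h^t)|/N=R_Z(h^t)$ by the definition of the $0/1$ loss, and collect the terms. Regrouping gives
\[
R_Z(h^s)=(1-R_Z(h^t))\,R^{h^t}_{acc(h^t)}(h^s)+R_Z(h^t)-R_Z(h^t)\,R^{h^t}_{err(h^t)}(h^s),
\]
which after factoring $R_Z(h^t)$ out of the last two summands matches the stated identity.

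There is no real obstacle here; the lemma is a bookkeeping identity. The only subtle point worth flagging is the binary-label step, since for $|\mathcal{Y}|>2$ a teacher mistake does not determine the student's error via complementation, so the clean $1-\mathbbm{1}[h^t\neq h^s]$ rewrite would fail and the decomposition would pick up additional cross-terms. I would make this explicit so that the scope of the lemma is clear before it is used in any subsequent risk analysis.
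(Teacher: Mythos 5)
Your proof is correct and follows essentially the same route as the paper's: partition $Z$ on whether the teacher errs, observe that on $err(h^t)$ student error equals teacher--student agreement, arrive at $R_Z(h^t)\bigl(1-R^{h^t}_{err(h^t)}(h^s)\bigr)+\bigl(1-R_Z(h^t)\bigr)R^{h^t}_{acc(h^t)}(h^s)$, and regroup. The only difference is that you spell out the indicator substitutions and normalizations that the paper leaves implicit, which is a reasonable expansion rather than a new argument.
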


\begin{proof}
Decomposing the student risk by partitioning $Z$ on whether the teacher makes a mistake
\begin{align*}
R_{Z}(h^s)=& R_{Z}(h^t) \big(1-R_{err(h^t)}^{h^t}(h^s)\big) + \big(1-R_{Z}(h^t)\big)R_{acc(h^t)}^{h^t}(h^s) \\
=&R_{Z}(h^t) \big(1-R_{acc(h^t)}^{h^t}(h^s) -R_{err(h^t)}^{h^t}(h^s)\big) +R_{acc(h^t)}^{h^t}(h^s)
\end{align*}
\end{proof}

\begin{theorem}[Truth Over Teacher] \label{thm1}
$$R_{Z}(h^s) \leq R_{Z}(h^t)   \iff  \frac{R_{err(h^t)}^{h^t}(h^s) }{ R_{acc(h^t)}^{h^t}(h^s)} \geq \frac{1}{R_{Z}(h^t)}-1$$
\end{theorem}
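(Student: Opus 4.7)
The plan is to treat Theorem~\ref{thm1} as an algebraic corollary of the Decomposition Lemma: the inequality $R_Z(h^s) \leq R_Z(h^t)$ unfolds, via the lemma, into a linear inequality in the two quantities $R^{h^t}_{acc(h^t)}(h^s)$ and $R^{h^t}_{err(h^t)}(h^s)$, and the ratio form in the theorem is what one obtains after isolating these two quantities on opposite sides.

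Concretely, I would first substitute the lemma's expression
\begin{equation*}
R_{Z}(h^s)=R_{Z}(h^t)\bigl(1-R_{acc(h^t)}^{h^t}(h^s) -R_{err(h^t)}^{h^t}(h^s)\bigr) + R_{acc(h^t)}^{h^t}(h^s)
\end{equation*}
into $R_Z(h^s)\leq R_Z(h^t)$ and cancel the leading $R_Z(h^t)$ on both sides. After regrouping, the remaining inequality reads
\begin{equation*}
R_{acc(h^t)}^{h^t}(h^s)\bigl(1-R_Z(h^t)\bigr) \leq R_Z(h^t)\, R_{err(h^t)}^{h^t}(h^s).
\end{equation*}
Dividing both sides by the positive product $R_Z(h^t)\, R_{acc(h^t)}^{h^t}(h^s)$ yields exactly $R_{err(h^t)}^{h^t}(h^s)/R_{acc(h^t)}^{h^t}(h^s)\geq 1/R_Z(h^t)-1$, and since every algebraic step is reversible the equivalence ($\Longleftrightarrow$) is immediate.

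The genuine friction in the argument is not the manipulation itself but the hygiene around degenerate cases: the division requires $R_Z(h^t)>0$ and $R^{h^t}_{acc(h^t)}(h^s)>0$, so I would briefly flag that the statement presumes the teacher is neither perfect nor useless on $Z$ and that the student agrees with the teacher on at least one correctly-labeled example. If the paper wants a fully unconditional formulation, one should instead present the equivalent cross-multiplied inequality and treat the ratio form as the generic statement; I would likely just note this in a single sentence so as not to derail the short algebraic proof. Beyond that caveat, the theorem is a direct rearrangement and no further machinery is needed.
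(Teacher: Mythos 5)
Your proposal is correct and follows essentially the same route as the paper's own proof: substitute the Decomposition Lemma into $R_Z(h^s)\leq R_Z(h^t)$, cancel $R_Z(h^t)$, and divide by $R_Z(h^t)\,R_{acc(h^t)}^{h^t}(h^s)$ to reach the ratio form. Your remark about the degenerate cases ($R_Z(h^t)>0$ and $R_{acc(h^t)}^{h^t}(h^s)>0$) is a small but legitimate point of hygiene that the paper's proof leaves implicit.
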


\begin{proof}
\begin{align*}
& R_{Z}(h^s) \leq R_Z(h^t) \\
&\iff \\
&  R_{Z}(h^t) \big(1-R_{acc(h^t)}^{h^t}(h^s) -R_{err(h^t)}^{h^t}(h^s)\big) + R_{acc(h^t)}^{h^t}(h^s) \leq R_Z(h^t) \\
&\iff \\
&R_{acc(h^t)}^{h^t}(h^s) \leq R_Z(h^t)  \big(R_{acc(h^t)}^{h^t}(h^s) +R_{err(h^t)}^{h^t}(h^s)\big) \\
&\iff \\
&\frac{1}{R_Z(h^t) } \leq 1+\frac{R_{err(h^t)}^{h^t}(h^s)}{R_{acc(h^t)}^{h^t}(h^s) } \\
&\iff \\
& \frac{R_{err(h^t)}^{h^t}(h^s) }{ R_{acc(h^t)}^{h^t}(h^s)} \geq \frac{1}{R_{Z}(h^t)}-1
\end{align*}
\end{proof}

The better the teacher is, the easier a student model fits the true label compared to a false label. In particular, if the prediction error the teacher makes $R_{Z}(h^t)$ is better than random guess, namely $R_{Z}(h^t) < 0.5$, then it is necessary that $\frac{R_{err(h^t)}^{h^t}(h^s) }{ R_{acc(h^t)}^{h^t}(h^s)} > 1$ so that $R_{Z}(h^s) \leq R_Z(h^t)$. From Theorem \ref{thm1}, for the student to be better than its teacher, the student's risk evaluated on the error set of the teacher (the erroneous teacher labels) has to be greater than the risk evaluated on the accuracy set of the teacher (the correct teacher labels) by a factor of $\frac{1}{R_z(h^t)} - 1$.

\section{Discussion}
\label{sec:discussions}
In low data regimes, our experiments surprisingly showed that a student model's performance is not upper bounded by the teacher model used in learning.  Our theoretical analysis indicates that this can happen, for example, when the student makes fewer errors on data with high teacher errors. We speculate this is because a capacity-restricted student model is able to generalize better than the teacher using the unsupervised data.

In high data regimes, we observed that the TAKD method facilitated efficient learning. We speculate that this could be due to step-wise distillation providing better calibrated posteriors for the eventual student to learn from. Indeed, it has been observed that in modern over-parameterized neural networks, posterior probabilities can become less calibrated~\cite{guo2017calibration}; using large amounts of supervised data could be making teacher models' estimates of posteriors less calibrated. 

\section{Conclusions And Future Work}
\label{sec:conclusions}
Using BLSTM and LSTM as teacher and student models respectively, we studied SSL for AMs for two tasks: (a) for small footprint on-device models; and (b) a larger footprint, but lower training data regime. Despite smaller model capacity, on-device models were able to exploit 1 million hours of unsupervised data and achieve a 14.3\% relative WER improvement after cross-entropy training and the gains could carry over to the sMBR stage. However, when we increased the supervised data from 7,000 hours to 52,000 hours, the learning efficiency decreased yielding only 7.1\% relative improvements in WER. We utilized a step-wise distillation and recovered the WER gains. For SSL in low data regimes, using knowledge distillation, we found that to achieve a performance comparable to that of a fully supervised system, the proportion of required supervised data decreased as the amount of total data increased. However, we found that in low data regimes, students can be better than teachers. In future work, we would like to extend this large scale study to distill using sequence-based discriminative criterion~\cite{Wong2016SequenceST,Manohar2018,LiZhao2020} instead of the frame-level cross-entropy criterion. Finally, it would be informative to understand if the analyses carried out in this paper using hybrid models would carry over to end-to-end ASR systems.

\section{Acknowledgements}                                                                   
\label{sec:acknowledge}
We would like to thank Minhua Wu, Jangwon Kim, Srinivas Parthasarathy, Kishore Nandury and Brian King for their helpful discussions.

\clearpage
\bibliographystyle{splncs04}
\bibliography{paper}

\end{document}